\renewcommand\labelenumi{(\roman{enumi})}
\renewcommand\theenumi\labelenumi
\definecolor{vgRed}{RGB}{193, 48, 24}
\definecolor{vgOrange}{RGB}{243, 111, 19}
\definecolor{vgYellow}{RGB}{235, 203, 56}
\definecolor{vgGreen}{RGB}{162, 185, 105}
\definecolor{vgLightBlue}{RGB}{13, 149, 188}
\definecolor{vgDarkBlue}{RGB}{6, 56, 81}
\newcommand{\N}{\mathbb{N}}
\newcommand{\Nz}{\mathbb{N}_0}
\newcommand{\R}{\mathbb{R}}
\newcommand{\Rp}{\R^+}
\newcommand{\Rpz}{\Rp_0}
\newcommand{\dt}{\text{d}t}
\newcommand{\Tr}{\operatorname{Tr}}
\newcommand{\LUT}{\operatorname{LUT}}
\acrodef{bem}[BEM]{blade element momentum theory}
\acrodef{lq}[LQ]{linear quadratic}
\acrodef{mpc}[MPC]{Model Predictive Controller}
\acrodef{wt}[WT]{Wind Turbine}
\acrodef{pi}[PI]{proportional-integral}
\acrodef{siso}[SISO]{single-input single-output}
\acrodef{lmi}[LMI]{linear matrix inequality}
\acrodef{del}[DEL]{damage equivalent load}
\acrodef{rms}[RMS]{root-mean-square}
\acrodef{ti}[TI]{turbulence intensity}
\pgfplotsset{compat=newest}
\tikzset{external/system call={pdflatex \tikzexternalcheckshellescape -halt-on-error
    -interaction=batchmode -jobname "\image" "\texsource"}}
\newcommand{%
    \tikzsetnextfilename{}%
    \input{figures/.tex}%
}[1]{%
    \tikzsetnextfilename{#1}%
    \input{figures/#1.tex}%
}
\newtheorem{problem}{Problem}
\newtheorem{lemma}{Lemma}
\renewcommand{\todo}[2][]{\tikzexternaldisable\@todo[#1]{#2}\tikzexternalenable}
\pgfplotsset{every axis/.append style={semithick,tick style={major tick
            length=4pt,semithick,black}}}
\pgfplotsset{myPlot/.style={%
        width=8cm,
        height=4cm,
        line width = 0.7pt,
        separate axis lines,
        axis x line*=bottom,
        x axis shift down = 3pt,
        enlarge x limits=false,
        axis y line*=left,
        y axis shift left = 6pt,
        enlarge y limits={abs=.25pt},
        enlarge x limits={abs=.25pt},
    }
}
\newcommand{\rawsat}[3]{\ThisStyle{\raisebox{#1\LMpt}{\kern.5\LMpt\scaleto{\rawsatimg{#3}}{#2\LMex}\kern.5\LMpt}}}
\newcommand{\rawsatimg}[1]{%
\tikzexternaldisable
\begin{tikzpicture}
\coordinate (A) at (-7,-7);
\coordinate (B) at (-2,-7);
\coordinate (C) at (2,7);
\coordinate (D) at (7,7);
\draw [black, line width=#1pt] (A)--(B)--(C)--(D);
\end{tikzpicture}%
\tikzexternalenable
}
\newcommand\sat{
    \relax\if@display
        \mathop{\rawsat{-5}{3.2}{20}}
    \else
        \mathop{\rawsat{-2.4}{2.25}{30}}
    \fi
}
\pgfpointdiff{\northeast}{\southwest}
\pgfpointdiff{\northeast}{\southwest}
\title{\LARGE \bf
Robust LQ Optimal Control for Wind Turbine Power Tracking Operation
}
\author{Aaron Grapentin$^1$, Christian A. Hans$^2$, and J\"org Raisch$^3$
\thanks{*This work was partially supported by the German Federal Ministry for Economic Affairs and and Climate Action (BMWK), project no. 03EE2036C.}
\thanks{$^{1}$Aaron Grapentin is with the Control Systems Group, Technische Universit\"at Berlin, Germany, {\tt\small grapentin@control.tu-berlin.de}}
\thanks{$^{2}$Christian A. Hans is with the Automation and Sensorics in Networked Systems Group, University of Kassel, Germany, {\tt\small hans@uni-kassel.de}}
\thanks{$^{3}$J\"org Raisch is with the Control Systems Group, Technische Universit\"at Berlin, Germany and with Science of Intelligence, Research Cluster of Excellence, Berlin, Germany, {\tt\small raisch@control.tu-berlin.de}}%
\thanks{We thank Arnold Sterle for fruitful discussions and insightful comments.}%
}
\begin{document}

\maketitle
\thispagestyle{empty}
\pagestyle{empty}

\begin{abstract}

In this paper, a robust linear quadratic optimal control approach for accurate active power tracking of wind turbines is presented.
For control synthesis, linear matrix inequalities are employed using an augmented wind turbine state model with uncertain parameters.
The resulting controller ensures robust stability in different operating regions.
In a case study, the novel approach is compared to existing controllers from literature.
Simulations indicate that the controller improves power tracking accuracy while leading to similar mechanical wear as existing approaches.

\end{abstract}
\acresetall


\section{Introduction}
Renewable energy sources play a central role in the fight against climate change.
Especially, solar and wind power have experienced a fast growth, which is expected to accelerate further \cite{murdock2021renewables}.
The rise of wind power in certain markets has lead to new requirements for wind farm operators, such as active power control \cite{de2007connection}:
to avoid undesired grid conditions, wind farms must curtail their production at certain times.
In such cases, the wind turbines of a farm must track a desired power trajectory which often lies below the weather-dependent available infeed.
In Germany, regulations require a maximal power tracking deviation of \SI{5}{\percent} from a wind farm's rated power \cite{vde2018fnn}.
With increasing share of renewable energy sources, accurate power tracking beyond this requirement is desirable as it allows to reduce safety margins and improve the overall system performance.

Several approaches for power tracking of individual wind turbines exist.
In \cite{karimpour2021exact}, exact output regulation is achieved while employing LiDAR information.
In \cite{pham2012lqr}, a \ac{lq} optimal control approach for wind turbines is proposed and numerically analyzed.
While relying on wind estimates, this control approach improves rotor speed and output power tracking only marginally compared to the baseline controller from \cite{nam2011feedforward}.
In \cite{radaideh2021active}, active and reactive power of doubly-fed induction generators are controlled.
While disturbance rejection for uncertain wind speed is validated, mechanical loads on the wind turbine components are not studied.
In \cite{grapentin2022lq}, an \ac{lq} optimal controller is presented that reduces \aclp{del} while achieving power tracking using wind speed estimates.

While these approaches provide stability guarantees in single operating points, none of them can provide guarantees for sets of operating conditions.
Additionally, most approaches are combinations of heuristically designed single-input single-output controllers.

In this paper, we derive a multivarible \ac{lq} optimal control scheme that considers rotor as well as tower top fore-aft dynamics and provides accurate power tracking.
As is customary in wind turbine control, we design different controllers that are used for different operating regions, which are then combined into an overall controller.
Control synthesis for each convex operating region uses \ac{lmi} and ensures robust stability for multiple operating points within that operating region.
Practical applicability is increased by only relying on wind speed estimates obtained by an observer.
In a case study, the resulting controller is compared to state-of-the-art approaches.
Extensive closed-loop simulations indicate significant improvements in power tracking accuracy while keeping mechanical loads sufficiently small.

The remainder of this paper is structured as follows.
In Section~\ref{sec:model}, the plant model is introduced.
In Section~\ref{sec:controller}, our novel control scheme is presented.
In Section~\ref{sec:case_study}, the controller is evaluated in a case study.
Section~\ref{sec:conclusion} concludes this work.

\subsection{Notation}
The set of real numbers is denoted by $\R$, the set of positive real numbers by $\Rp$, the set of nonnegative real numbers by $\Rpz$ and the set of positive integers by $\mathbb{N}$.
The Euclidean norm is denoted by $\|\cdot\|_2$.
The trace operator $\Tr(\cdot)$ returns the sum of all diagonal entries of a given square matrix.
The operator $\LUT(\cdot)$ denotes a lookup-table using linear interpolation.
The saturation operator $\sat$ represents
\begin{align}
y = \sat_{\underline{x}}^{\overline{x}}x
=\begin{cases}
\underline{x}, &\text{if } x \leq \underline{x},\\
\overline{x}, &\text{if } x \geq \overline{x},\\
x, &\text{else},
\end{cases}
\end{align}
where $x\in\R$ is being saturated by the bounds $\underline{x},\overline{x}\in\R$, with $\underline{x}<\overline{x}$, such that $y\in[\underline{x}, \overline{x}]$.
Negative definiteness of a square matrix $Q$ is denoted by $Q\prec 0$.
Analogously, ${Q\succ0}$ denotes positive definiteness of $Q$.
Negative and positive semi-definiteness of $Q$ are denoted by $Q\preceq 0$ and $Q\succeq 0$.
Moreover, $Q_1\succ Q_2$ denotes that $Q_1 - Q_2$ is positive definite.
For any $j\in\mathbb{N}$, $I_j$ represents the $j\times j$ identity matrix.
With $n,m\in\N$, a zero matrix of size $n\times m$ is denoted by $0_{n\times m}$.


\section{Model}
\label{sec:model}
Our model of a wind turbine consists of three parts: the aerodynamic system models the power captured by the rotor.
The drive train and generator dynamics model the conversion of mechanical power at the rotor to electrical power provided to the grid.
Lastly, the fore-aft tower top dynamics model tower position and velocity.
In what follows, these different parts will be first discussed separately and then combined into an overall state model.
Throughout this paper, we assume that the wind direction is perpendicular to the rotor.

\subsection{Aerodynamic System}
At time $t\in\Rpz$, the wind power is given by
\begin{equation}
P_\text{wind}(t) = \frac{\rho}{2}\pi r^2 V(t)^3,
\label{eq:wind_power}
\end{equation}
where $r\in\Rp$ is the rotor radius, $\rho\in\Rp$ the air density, and $V(t)\in\Rp$ the wind speed \cite{Aho2012,Boersma2017}.
The power coefficient
\begin{equation}
C_p\big(\lambda(t), \theta(t)\big) = \frac{P_r(t)}{P_\text{wind}(t)},
\label{eq:power_coefficient_ratio}
\end{equation}
describes the ratio of rotor power $P_r(t)\in\Rpz$ to wind power.
Evidently, $P_r(t)\leq P_\text{wind}(t)$ and $C_p(\cdot)\in[0,1]$.
Apart from blade material, geometry, and weight, $C_p(\cdot)$ depends on the pitch angle $\theta(t)\in\R$ and the tip-speed ratio
\begin{equation}
\lambda(t)=r\frac{\omega_r(t)}{V(t)}\in\Rpz,
\label{eq:lambda_definition}
\end{equation}
which describes the relation of blade-tip speed at rotor angular velocity $\omega_r(t)\in\Rp$ and wind speed $V(t)$.
In this work, a piecewise-affine approximation of $C_P(\cdot)$ using data of the IEA \SI{3.4}{\mega\watt} land-based wind turbine \cite{RWT} is employed.
Combining (\ref{eq:wind_power}) and (\ref{eq:power_coefficient_ratio}) allows us to express the rotor torque as
\begin{equation}
M_r(t)=\frac{P_r(t)}{\omega_r(t)} =\frac{\rho}{2}\frac{\pi r^2 V(t)^3}{\omega_r(t)}C_p(\lambda(t), \theta(t)).
\label{eq:generator_torque_definition}
\end{equation}

\subsection{Drive Train and Generator}
We assume a stiff coupling of rotor, gearbox, and generator without any shaft deformation, i.e., the generator angular velocity is $\omega(t) = N_g\omega_r(t)$, where $N_g\in\Rp$ denotes the gearbox ratio.
The drive-train dynamics can be described by
\begin{equation}
J_t \dot{\omega}_r(t) = \frac{J_t}{N_g}\dot{\omega}(t) = M_r(t) - N_g M_g(t),
\label{eq:gearbox_definition}
\end{equation}
with moment of inertia $J_t\in\Rp$ and generator torque ${M_g(t)\in\Rpz}$ \cite{schlipf2013nonlinear}.
Moreover, the electrical power is
\begin{equation}
P(t)=\eta\omega(t)M_g(t),
\label{eq:generator_definition}
\end{equation}
with combined gearbox and generator efficiency $\eta\in(0,1]$.

\subsection{Tower Dynamics}
The fore-aft tower top position $x_t(t)\in\R$ and velocity $v_t(t)=\dot{x}_t(t)$ dynamics follow the equation
\begin{align}
    M_t\dot{v}_t(t)\! + D_t v_t(t)\! + K_t x_t(t) = \frac{\rho\pi r^2}{2}V(t)^2 C_T\big(\lambda(t), \theta(t)\big),
    \label{eq:tower_dynamics}
\end{align}
where $M_t, D_t, K_t\in\Rp$ denote tower top mass, damping, and stiffness, respectively.
Moreover, the right hand side of (\ref{eq:tower_dynamics}) describes the aerodynamic force onto the tower \cite{bottasso2014lidar,lescher2006multiobjective}.
Similar to the power coefficient, the thrust coefficient $C_T(\cdot)$ is modeled by a piecewise-affine approximation based on \cite{RWT}.

\subsection{State Model}
Combining (\ref{eq:generator_torque_definition}), (\ref{eq:gearbox_definition}), and (\ref{eq:tower_dynamics}) provides a first state model
\begin{align}
\begin{bmatrix}
\dot{\omega}(t) \\ \dot{x}_t(t) \\ \dot{v}_t(t)
\end{bmatrix}
\!\!\!=\!\!\!\begin{bmatrix}
\frac{\rho\pi r^2 N_g^2}{2 J_t}\frac{V(t)^3}{\omega(t)}C_P\big(\lambda(t), \theta(t)\big) - \frac{N_g^2}{J_t}M_g(t)\\
v_t(t)\\
\frac{\rho\pi r^2}{2M_t}V(t)^2C_T\big(\lambda(t), \theta(t)\big)\!-\!\frac{D_t}{M_t}v_t(t)\!-\!\frac{K_t}{M_t}x_t(t)
\end{bmatrix}.
\label{eq:model}
\end{align}
Here, $\begin{bmatrix}\theta(t)& M_g(t)\end{bmatrix}^T$ is determined by the vector of control inputs (see Section \ref{sec:augmentation}) and $V(t)$ is an uncertain external input.


\section{Control Approach}
\label{sec:controller}
In Figure \ref{fig:lq_overview}, an overview of the presented control approach is shown.
On the left hand side, the equilibrium $x^s$ of an augmented state model (see Section~\ref{sec:augmentation}) is found for a given estimate $\hat{w}$ of the uncertain external input vector $w$.
The latter includes the wind speed $V\in\R$, the desired generator speed $\omega^d\in\R$, and the desired power $P^d\in\R$.
For the specific definitions of $\omega^d$ and $P^d$, please refer to Section~\ref{sec:trajectory_generation}.
Based on the deviation of the state $x$ from the equilibrium $x^s$, control actions $\mu$ are deduced by the controller and applied with the equilibrium input $u^s$ to the augmented wind turbine model.
In addition to the control input $u$, the external input $w$ influences the system's behavior.
\begin{figure}[htbp]
	\centering
	\includegraphics{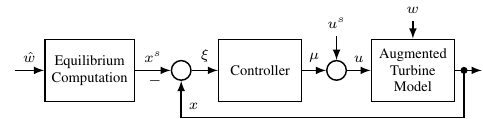}
	\caption{Closed-loop control scheme.}
	\label{fig:lq_overview}
\end{figure}

In what follows, we will design two controllers for different operating regions which will be combined later into one control law.
We first derive an augmented wind turbine model in Section~\ref{sec:augmentation}.
Then, we compute the equilibrium in Section~\ref{sec:equilibrium}.
In Section~\ref{sec:linearization}, the linearization is discussed.
The robust control design for each of the two operating region is presented in Section~\ref{sec:uncertainty}.
In Section~\ref{sec:trajectory_generation}, the derivation of the external input estimate $\hat{w}$ is discussed.
Lastly, Section~\ref{sec:switching_controller_subsection} presents the overall controller that combines the controllers designed in Section~\ref{sec:uncertainty}.

\subsection{State Model Augmentation}
\label{sec:augmentation}
We augment the state model as follows.
\begin{enumerate}
    \item We include the integral errors of generator speed and power as additional state variables $z_\omega(t)$ and $z_P(t)$
    \begin{subequations}
        \begin{align}
            \dot{z}_{\omega}(t)\!&=\!\omega^d(t)\!-\!\omega(t),\\
            \dot{z}_P(t)\!&=\!P^d(t)\!-\!P(t)\!=\!P^d(t)\!-\!\eta\omega(t)M_g(t).
            \label{eq:integrator_power_error}
        \end{align}
    \end{subequations}
    \item We use the derivatives of the pitch angle and generator torque, as control inputs $u_1$ and $u_2$, i.e.,
    \begin{subequations}
        \begin{align}
            \dot{\theta}(t) &= u_1(t),\\
            \dot{M}_g(t) &= u_2(t).
        \end{align}
    \end{subequations}
\end{enumerate}
The resulting augmented plant model can then be stated as
\begin{equation}
    \dot{x}(t)=f_a\big(x(t), w(t)\big)+B u(t),
    \label{eq:augmented_nonlinear_model}
\end{equation}
where 
\begin{subequations}
    \begin{align}
        x(t)&=\!\begin{bmatrix}\omega(t)&\!x_t(t)&\!v_t(t)&\!z_\omega(t)&\!z_P(t)&\!\theta(t)&\!M_g(t)\end{bmatrix}^T,\nonumber\\
        u(t)&=\!\begin{bmatrix}u_1(t)&\!u_2(t)\end{bmatrix}^T,\nonumber\\
        w(t)&=\!\begin{bmatrix}V(t)&\!\omega^d(t)&\!P^d(t)\end{bmatrix}^T,\nonumber
    \end{align}
\end{subequations}
are the augmented state vector, the vector of control inputs, and the vector of external inputs, respectively.
Moreover, 
\begin{align}
    &f_a\big(x(t),w(t)\big)=\nonumber\\
    &\begin{bmatrix}
        \frac{\rho\pi r^2 N_g^2}{2 J_t}\frac{V(t)^3}{\omega(t)}C_P\big(\lambda(t), \theta(t)\big) - \frac{N_g^2}{J_t}M_g(t)\\
        v_t(t)\\
        \frac{\rho\pi r^2}{2M_t}V(t)^2C_T\big(\lambda(t), \theta(t)\big)-\frac{D_t}{M_t}v_t(t)- \frac{K_t}{M_t}x_t(t)\\
        \omega^d(t) - \omega(t)\\
        P^d(t) - \eta\omega(t)M_g(t)\\
        0\\
        0 
    \end{bmatrix},
\end{align}
and $B = \begin{bmatrix}0_{2\times 5}& I_2\end{bmatrix}^T$.

\subsection{Equilibrium Computation}
\label{sec:equilibrium}
For the upcoming \ac{lq} optimal control design, linear state models are needed.
Therefore, we determine equilibria around which we will linearize in Section~\ref{sec:linearization}.
\begin{lemma}
Given a constant external input $w^s=\begin{bmatrix}V^s& \omega^{d,s}& P^{d,s}\end{bmatrix}^T$, there exists an equilibrium $(x^s, u^s, w^s)$ with
$x^s=\begin{bmatrix}\omega^s&x_t^s&v_t^s&z_{\omega}^s&z_P^s&\theta^s&M_g^s\end{bmatrix}^T$ and ${u^s=\begin{bmatrix}u_1^s& u_2^s\end{bmatrix}^T}$, if and only if
\begin{equation}
C_P(\lambda^s, \theta^s) = \frac{2 P^{d,s}}{\rho\pi r^2\eta (V^s)^3},
\label{eq:equilibrium_condition}
\end{equation}
with $\lambda^s=\frac{r}{N_g}\frac{\omega^{d,s}}{V^s}$, has a solution $\theta^s\in[\underline{\theta}, \overline{\theta}]$, where $\underline{\theta}, \overline{\theta}\in\R$ denote the lower and upper pitch angle bound.
\end{lemma}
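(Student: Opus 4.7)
The plan is to show both directions by directly solving $f_a(x^s,w^s)+Bu^s = 0$ component by component, showing that all equilibrium quantities are determined by $w^s$ and $\theta^s$, and that the only remaining constraint is precisely (\ref{eq:equilibrium_condition}).

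First, I would walk through the seven equations implied by $f_a(x^s,w^s)+Bu^s = 0$ in an order that progressively pins down the equilibrium variables. The fourth component gives $\omega^{d,s}-\omega^s=0$, so $\omega^s=\omega^{d,s}$; substituting into $\lambda(t)=r\omega_r(t)/V(t)$ together with $\omega=N_g\omega_r$ recovers the definition $\lambda^s=\tfrac{r}{N_g}\omega^{d,s}/V^s$ stated in the lemma. The fifth component, $P^{d,s}-\eta\omega^sM_g^s=0$, then uniquely determines the equilibrium generator torque $M_g^s=P^{d,s}/(\eta\omega^{d,s})$. The second and third components give $v_t^s=0$ and then $x_t^s=\tfrac{\rho\pi r^2}{2K_t}(V^s)^2 C_T(\lambda^s,\theta^s)$. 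The last two rows of $f_a$ vanish identically, and since the only terms of $\dot x$ touched by $B$ are $\dot\theta$ and $\dot M_g$, the equilibrium input is forced to be $u^s=0$. The integrator states $z_\omega^s$ and $z_P^s$ never appear on the right-hand side of $f_a$ and can therefore be chosen arbitrarily.

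The remaining condition is the first component of $f_a(x^s,w^s)=0$, namely
\begin{equation*}
\frac{\rho\pi r^2 N_g^2}{2J_t}\frac{(V^s)^3}{\omega^s}C_P(\lambda^s,\theta^s)=\frac{N_g^2}{J_t}M_g^s.
\end{equation*}
Substituting the expressions already derived for $\omega^s=\omega^{d,s}$ and $M_g^s=P^{d,s}/(\eta\omega^{d,s})$, multiplying both sides by $2J_t\omega^{d,s}/(\rho\pi r^2 N_g^2 (V^s)^3)$, and cancelling $\omega^{d,s}$ yields exactly (\ref{eq:equilibrium_condition}). Thus an equilibrium exists if and only if this scalar equation admits a solution $\theta^s\in[\underline{\theta},\overline{\theta}]$, establishing both directions simultaneously.

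There is no real obstacle here; the proof is a direct algebraic verification. The only subtle point worth flagging in the write-up is that the equilibrium is not unique: $z_\omega^s,z_P^s$ are free, and if $C_P(\lambda^s,\cdot)$ is not monotone on $[\underline\theta,\overline\theta]$ there may be more than one admissible $\theta^s$. It is therefore cleanest to phrase the argument as ``given $w^s$, the equilibrium values of $\omega^s,x_t^s,v_t^s,M_g^s,u^s$ are expressed in closed form in terms of $w^s$ and $\theta^s$, while $\theta^s$ itself is constrained solely by (\ref{eq:equilibrium_condition}),'' from which the ``iff'' follows immediately.
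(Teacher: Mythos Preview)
Your proposal is correct and follows essentially the same component-by-component reduction as the paper: both arguments observe that $u^s=0$ is forced by the structure of $B$, pin down $\omega^s$, $M_g^s$, $v_t^s$, and $x_t^s$ from the corresponding rows of $f_a$, and then reduce the first row to the scalar condition (\ref{eq:equilibrium_condition}) on $\theta^s$. Your additional remarks on the freedom of $z_\omega^s, z_P^s$ and possible non-uniqueness of $\theta^s$ are sound and in fact anticipate the paper's convention (stated just after the proof) of selecting the largest admissible $\theta^s$.
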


\begin{proof}
Evidently, because of the particular structure of function $f_a$ and matrix $B$, the equilibrium condition
\begin{equation}
    f_a(x^s, w^s) + B u^s = 0,
\end{equation}
is equivalent to $u^s=0$ and $f_a(x^s, w^s)=0$.
Therefore, we have $v_t^s = 0$, $\omega^s = \omega^{d,s}$, and $M_g^s = \frac{P^{d,s}}{\eta\omega^{d,s}}$.
Employing $v^s_t = 0$, we further have that
\begin{equation}
x_t^s = \frac{\rho\pi r^2}{2 K_t}(V^s)^2 C_T(\lambda^s, \theta^s).
\end{equation}
Finally, $\dot{\omega}^s = 0$ can be reformulated into
\begin{equation}
C_P(\lambda^s, \theta^s) = \frac{2M_g^s\omega^{d,s}}{\rho\pi r^2 (V^s)^3} = \frac{2P^{d,s}}{\rho\pi r^2\eta (V^s)^3}.
\label{eq:cp_equilibrium}
\end{equation}
Here, $\lambda^s = \frac{r}{N_g}\frac{\omega^{d,s}}{V^s}$ is fixed via the external inputs $\omega^{d,s}$ and $V^s$.
Moreover, the right-hand side is fixed through $P^{d,s}$ and $V^s$.
Thus, if and only if we can find a $\theta^s\in[\underline{\theta}, \overline{\theta}]$ that satisfies (\ref{eq:cp_equilibrium}), then we can find an equilibrium.
\end{proof}
If, for a given $w^s$, multiple solutions $\theta^s$ of (\ref{eq:equilibrium_condition}) exist, the largest $\theta^s$ is chosen.

\subsection{Linearization}
\label{sec:linearization}
Linearizing (\ref{eq:augmented_nonlinear_model}) around an equilibrium $(x^s, u^s, w^s)$ leads to a state model of the form
\begin{equation}
    \dot{\xi}(t) = A\xi(t)+B\mu(t),
\end{equation}
with $\xi(t)\!=\!x(t)-x^s$ and $\mu(t)\!=\!u(t)-u^s\!=\!u(t)$.
Here, $A$ is the Jacobian associated with $f_a(\cdot)$ evaluated at $(x^s, w^s)$, which exhibits the following nonzero elements:

\begin{subequations}
    \begin{align}
        A^{(1,1)} &= \frac{\rho\pi r^2 N_g^2}{2 J_t}V(t)^{3}\left.\frac{\partial\frac{C_P\big(\lambda(t), \theta(t)\big)}{\omega(t)}}{\partial\omega(t)}\right\rvert_{(x^s, w^s)},
        \label{eq:derivative_ai_11}\\
        A^{(1,6)} &= \frac{\rho\pi r^2 N_g^2}{2 J_t}\frac{V(t)^3}{\omega(t)}\left.\frac{\partial C_P\big(\lambda(t), \theta(t)\big)}{\partial\theta(t)}\right\rvert_{(x^s, w^s)},
        \label{eq:derivative_ai_16}\\
        A^{(1,7)} &= -\frac{N_g^2}{J_t},\quad A^{(2,3)} = 1,\\
        A^{(3,1)} &= \frac{\rho\pi r^2}{2 M_t}V(t)^2 \left.\frac{\partial C_T\big(\lambda(t),\theta(t)\big)}{\partial\omega(t)}\right\rvert_{(x^s, w^s)},
        \label{eq:derivative_ai_31}\\
        A^{(3,2)} &= -\frac{K_t}{M_t},\quad A^{(3,3)} = -\frac{D_t}{M_t},\\
        A^{(3,6)} &= \frac{\rho\pi r^2}{2 M_t}V(t)^2\left.\frac{\partial C_T\big(\lambda(t),\theta(t)\big)}{\partial\theta(t)}\right\rvert_{(x^s, w^s)},
        \label{eq:derivative_ai_36}\\
        A^{(4,1)} &= -1,\ A^{(5,1)} = -\eta\cdot M_{g}^s,\ A^{(5,7)}=-\eta\cdot\omega^s.
    \end{align}
\end{subequations}
Note that, (\ref{eq:derivative_ai_11}), (\ref{eq:derivative_ai_16}), (\ref{eq:derivative_ai_31}), and (\ref{eq:derivative_ai_36}) contain derivatives of $C_P(\cdot)$ and $C_T(\cdot)$ which are deduced from their piecewise affine approximations.

\subsection{Robust Control Design}
\label{sec:uncertainty}
We now discuss the controller design for one operating region.
For this, we consider a set of $q\in\N$ constant external inputs $\mathbb{W}^s=\{w_1^s,\dots, w_q^s\}\subset\R^3$.
For each element in $\mathbb{W}^s$, an equilibrium is computed around which (\ref{eq:augmented_nonlinear_model}) is linearized.
This results in $q$ linear systems of the form
\begin{equation}
    \dot{\xi}_i(t) = A_i\xi_i(t)+B\mu_i(t),~\forall i\in\mathbb{S}=\{1,\dots,q\},
    \label{eq:linear_state_model_xi}
\end{equation}
with $\xi_i(t)\in\R^n$, $\mu_i(t)\in\R^p$, $n=7$, $p=2$.
Due to the structure of (\ref{eq:augmented_nonlinear_model}), it can be easily confirmed that each pair $(A_i, B)$ is controllable.
We aim to find a state feedback
\begin{equation}
\mu_i(t)=K\xi_i(t),
\label{eq:feedback_law}
\end{equation}
with $K\in\R^{p\times n}$, for the investigated operating region.
We introduce a cost for subsystem $i$
\begin{equation}
J_i=\frac{1}{\left\|\xi_i(0)\right\|_2^2}\int_0^\infty\xi_i(t)^TQ\xi_i(t)+\mu_i(t)^TR\mu_i(t)\dt,
\label{eq:lqr_cost_function}
\end{equation}
which is finite if ${(A_i+BK)}$ is Hurwitz.
Here, $Q\in\R^{n\times n}$ is positive semidefinite and $R\in\R^{p\times p}$ positive definite.
It is well known, (see, e.g., \cite{boyd1991linear,scherer1997multiobjective}) that the cost $J_i$ is the $H_2$-norm of the closed loop transfer function matrix ${\begin{bmatrix}Q^\frac{1}{2}& K^T{R^\frac{1}{2}}^T\end{bmatrix}^T(sI - A_i - BK)^{-1}}$, i.e.,
\begin{subequations}
    \begin{align}
        J_i &= \Tr\bigg(\begin{bmatrix}{Q^\frac{1}{2}}^T\\ R^\frac{1}{2}K\end{bmatrix}P_i\begin{bmatrix}Q^\frac{1}{2}& K^TR^\frac{1}{2}\end{bmatrix}\bigg),\\
            &= \Tr({Q^\frac{1}{2}}^TP_iQ^\frac{1}{2}) + \Tr(R^\frac{1}{2}KP_i K^TR^\frac{1}{2}),
    \end{align}
\end{subequations}
where the positive definite $P_i\in\R^{n\times n}$ satisfy
\begin{equation}
(A_i+BK)P_i+P_i(A_i+BK)^T + I_n = 0.
\label{eq:stability_condition_equal}
\end{equation}
To find a controller $K$ for the investigated operating region, we solve the following specific problem.
\begin{problem}
    \label{pro:problem}
    \begin{equation*}
    \min_{K,P}\;\Big(\underbrace{\Tr({Q^\frac{1}{2}}^TPQ^\frac{1}{2}) + \Tr(R^\frac{1}{2}KP K^TR^\frac{1}{2})}_{:=J}\Big),
    \end{equation*}
    subject to
    \vspace{-.2cm}
    \begin{subequations}\label{eq:problem_constraints}
    \begin{align}
        (A_i+BK)P+P(A_i+BK)^T + I_n &\prec 0,~\forall i\in\mathbb{S},\label{eq:stability_condition_in_problem}\\
        P&\succ 0\label{eq:p_pos_def_in_problem}.
    \end{align}
    \end{subequations}
\end{problem}
If a matrix $P$ satisfying (\ref{eq:problem_constraints}) is found, then all subsystems are asymptotically stable as $\xi(t)^TP\xi(t)$ is a joint quadratic Lyapunov function $\forall i\in\mathbb{S}$.
Moreover, subtracting (\ref{eq:stability_condition_equal}) from (\ref{eq:stability_condition_in_problem}) results in
\begin{equation}
    (A_i+BK)(P-P_i) + (P-P_i)(A_i+BK)^T \prec 0.
    \label{eq:stability_subtraction}
\end{equation}
Since $(A_i+BK)$ is Hurwitz, any solution $(P-P_i)$ to the Lyapunov inequality (\ref{eq:stability_subtraction}) is positive definite, i.e., $P\succ P_i$.
Consequently, 
\begin{multline}
    J_i = \Tr({Q^\frac{1}{2}}^TP_iQ^\frac{1}{2}) + \Tr(R^\frac{1}{2}KP_i K^TR^\frac{1}{2}) \\
    <\Tr({Q^\frac{1}{2}}^TPQ^\frac{1}{2}) + \Tr(R^\frac{1}{2}KP K^TR^\frac{1}{2}) = J,
\end{multline}
and therefore $\displaystyle\min_{P_i,K} J_i < \min_{P,K} J\quad\forall~i\in\mathbb{S},$ i.e., the minimal cost of Problem \ref{pro:problem} is an upper bound for $\displaystyle\max_{i\in\mathbb{S}}\big(\min_{P_i,K} J_i\big)$.

Due to nonlinearities, Problem~\ref{pro:problem} cannot be solved right away.
Introducing $Y=KP$ allows us to transform (\ref{eq:stability_condition_in_problem}) into an \ac{lmi} and write the cost function as
\begin{equation}
J = \Tr(QP)+\Tr(R^\frac{1}{2}YP^{-1}Y^TR^\frac{1}{2}).
\label{eq:nonlinear_trace_cost}
\end{equation}
Analogously to \cite{feron1992numerical}, we deal with the second term in (\ref{eq:nonlinear_trace_cost}) by introducing an upper bound $X\in\R^{p\times p}$ and reformulate the resulting inequality using Schur's complement \cite{boyd1994linear}, i.e.,
\begin{equation}
X\succ R^\frac{1}{2}YP^{-1}Y^TR^\frac{1}{2} \iff \begin{bmatrix}X & R^\frac{1}{2}Y\\Y^TR^\frac{1}{2} & P\end{bmatrix} \succ 0.
\end{equation}
This results in the following linear semidefinite problem.
\begin{problem}
    \label{pro:tractable_problem}
    \begin{equation*}
    \min_{P,Y,X}\;\Tr(QP)+\Tr(X),
    \end{equation*}
    subject to
    \begin{subequations}
    \begin{align}
    \begin{bmatrix}X & R^\frac{1}{2}Y\\Y^TR^\frac{1}{2} & P\end{bmatrix}\succ 0,~ P&\succ 0,\\
    A_iP + PA_i^T + BY + Y^TB^T + I_n &\prec 0,~\forall i\in\mathbb{S}.
    \end{align}
    \end{subequations}
\end{problem}
From the optimal solution to Problem~\ref{pro:tractable_problem}, the state feedback matrix can then be obtained via ${K=Y P^{-1}}$.
Note that, the resulting controller assures asymptotical stability for any system $\dot{\xi}=\tilde{A}\xi+B\mu$ with ${\tilde{A}=\sum_{i=1}^q\alpha_i A_i},{\alpha_i\geq 0}, {\sum_{i=1}^q\alpha_i=1}$ \cite{oliveira2006lmi}.

We will implement the resulting continuous-time feedback $\mu(t) = K\xi(t)$ in a discrete-time fashion, i.e., $\mu(t_k) = K\xi(t_k)$, $k\in\Nz$, $t_{k+1} - t_k=T_s$.

\subsection{External Input}
\label{sec:trajectory_generation}
The equilibria $x^s$ which are needed to derive the q linear models (\ref{eq:linear_state_model_xi}) are computed using an estimate of the external input
\begin{equation}
\hat{w}(k)=\begin{bmatrix}\hat{V}(k)& \omega^d(k)& P^d(k)\end{bmatrix}^T,
\end{equation}
Here, the wind speed estimate $\hat{V}(k)$ is typically obtained from an observer \cite{schreiber2020field}.
A high level wind farm controller provides the reference power $P^{ref}(k)\in\Rp$.
As the wind may not suffice to provide the reference power, the desired power
\begin{equation}
P^d(k) = \min\left(P^{ref}(k), \frac{\rho\pi r^2\eta}{2}\hat{V}(k)^3 C_{p,\text{opt}}\right),
\label{eq:corrected_desired_power}
\end{equation}
corrects the reference power by considering the weather-dependent maximum wind power, given at maximal power coefficient $C_{p,\text{opt}}=\max_{\lambda,\theta} C_p(\lambda, \theta)$.
From $P^d(k)$, a desired generator speed $\omega^d(k)$ is computed through a lookup table
\begin{equation}
    \omega^d(k) = \LUT\big(P^d(k)\big).
    \label{eq:setpoint_speed}
\end{equation}
This lookup table is generated analogously to \cite{kim2018design}.

\subsection{Overall Controller}
\label{sec:switching_controller_subsection}
Wind turbines usually exhibit four operating regions (see Figure~\ref{fig:regions}).
In regions 1 and 4, the turbine is not producing any power due to very low or very high wind speeds, with the latter effectively requiring a shut-down of the turbine.
In region 2, the turbine maximizes output power.
In region 3, the output power is limited to a value $P^{ref}$ below the weather-dependent available power.
The wind speed $V^d$, that separates regions 2 and 3, is a function of $P^{ref}$.
\begin{figure}[htbp]
        \centering
	    \includegraphics{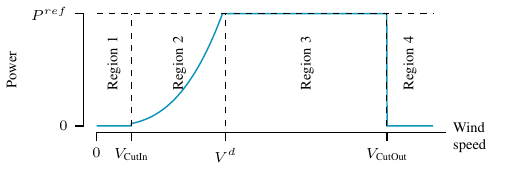}
        \caption{Wind turbine operating regions.}
        \label{fig:regions}
\end{figure}
Regions 1 and 4 are trivial from a control point of view, hence only a control scheme for regions 2 and 3 is needed.
Motivated by different control objectives in these regions, a combination of controllers is proposed.
The controller $K_j\in\R^{2\times 7}$ for operating region $j\in\{2,3\}$, is computed using the weights $Q_j\in\R^{7\times 7}$ and $R_j\in\R^{2\times 2}$ along with the sets of linear models $(A_i^j,B)\forall i\in\mathbb{S}_j$.
Here, $Q_2$ and $R_2$ are chosen to reflect the aim of maximizing produced power, while $Q_3$ and $R_3$ are chosen to achieve accurate power tracking.
The final controller is constructed as a convex combination of $K_2$ and $K_3$, i.e.,
\begin{equation}
    K = \alpha(k) K_2 + \big(1-\alpha(k)\big) K_3.
    \label{eq:controller_linear_combination}
\end{equation}
Here, $\alpha(k)\in[0,1]$ is computed based on the wind speed estimate $\hat{V}$ and $V^d$, i.e.,
\begin{equation}
\alpha = \begin{cases}
1,&\text{if } \hat{V}\leq V^d-\Delta V,\\
0,&\text{if } \hat{V}\geq V^d+\Delta V,\\
-\frac{1}{2\Delta V}(\hat{V} - V^d - \Delta V),&\text{else}.
\end{cases}
\label{eq:alpha_computation}
\end{equation}
The linear ramp between the thresholds enables a smooth transition between $K_2$ and $K_3$.


\section{Case Study}
\label{sec:case_study}
In this section, we evaluate the proposed robust \ac{lq} controller (\ref{eq:controller_linear_combination}), (\ref{eq:alpha_computation}) using OpenFAST, an open-source wind turbine simulation tool \cite{jonkman2022openfast}.
OpenFAST provides a detailed model describing the relation between $\begin{bmatrix}\theta& M_g\end{bmatrix}^T$ and $\begin{bmatrix}\omega& x_T& v_T\end{bmatrix}^T$.
We augment this simulation model with additional states as described in Section~\ref{sec:augmentation} (see Figure~\ref{fig:augmented_model}).
Furthermore, pitch angle, generator torque and their derivatives w.r.t. time are subject to saturation constraints, i.e., 
\begin{subequations}
    \begin{equation}
        \theta(t)=\sat_{\underline{\theta}}^{\overline{\theta}}\bigg(\int_0^t\sat_{\underline{\Delta\theta}}^{\overline{\Delta\theta}}u_1(\tau)\text{d}\tau\bigg),
    \end{equation}
    \begin{equation}
        M_g(t)=\sat_{\underline{M}_g}^{\overline{M}_g}\bigg(\int_0^t\sat_{\underline{\Delta M}_g}^{\overline{\Delta M}_g}u_2(\tau)\text{d}\tau\bigg),
    \end{equation}
\end{subequations}
where $\underline{\theta}, \overline{\theta}, \underline{M}_g, \overline{M}_g\in\R$ are the lower and upper bounds for pitch angle and generator torque, respectively.
Moreover, $\underline{\Delta\theta}, \overline{\Delta\theta}, \underline{\Delta M}_g, \overline{\Delta M}_g\in\R$ denote the lower and upper bounds of the pitch angle rate of change and generator torque rate of change, respectively.
We compare the controller presented in this paper (Robust LQ) with the baseline controller (BLC) from \cite{schlipf2016lidar} and the \ac{lq} optimal controller described in a previous work \cite{grapentin2022lq}.
The latter differs from the one considered in this paper, as robustness issues and tower movement were not explicitly considered.
Moreover, no integral error for power reference tracking was included.
For all simulations, a sampling time of $T_s=\SI{4}{\milli\second}$ is used.
The wind speed estimates for all control approaches were generated employing the wind observer presented in \cite{schreiber2020field}.
\begin{figure}[htbp]
	\centering
	\includegraphics{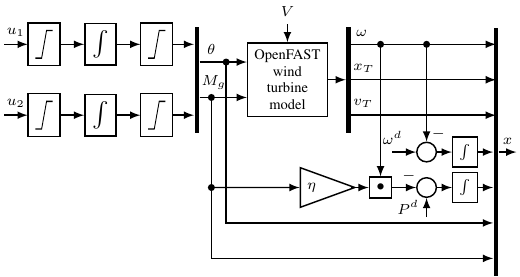}
	\caption{Wind turbine simulation model.}
	\label{fig:augmented_model}
\end{figure}

In Figure~\ref{fig:powerRegions3and2}, the Robust LQ and the BLC were compared in decelerating wind conditions.
During the first \SI{10}{\minute}, the wind speed allows arbitrary power tracking, therefore mostly controller $K_3$ is used.
After $t=\SI{10}{\minute}$, controller $K_2$ is used to maximize the power production.
While maximizing power, Robust LQ and BLC behave similarly, with slightly higher oscillations for the Robust LQ controller.
\begin{figure}[b]
	\centering
	\includegraphics{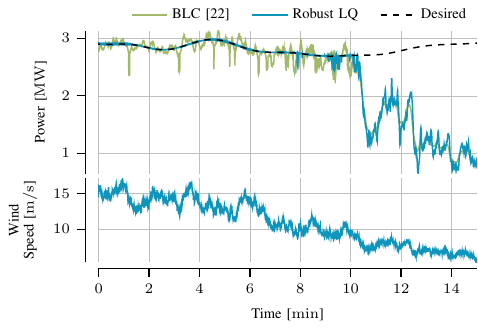}
	\caption{Power tracking comparison.}
	\label{fig:powerRegions3and2}
\end{figure}

\begin{figure}[t]
	\centering
	\includegraphics{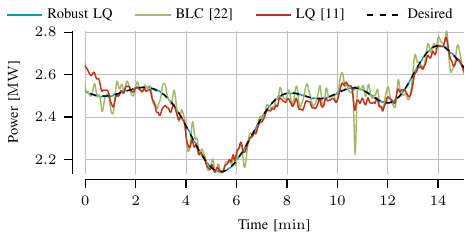}
	\caption{Power tracking comparison.}
	\label{fig:powerTracking}
\end{figure}
Figure~\ref{fig:powerTracking} displays the desired power as well as the output power for the Robust LQ, the BLC, and the \ac{lq} optimal controller from \cite{grapentin2022lq}.
The results in Figure~\ref{fig:powerTracking} were generated with an average wind speed of \SI{14.8}{\meter\per\second} and a turbulence intensity of \SI{6}{\percent}.
Evidently, the Robust \ac{lq} controller tracks the desired output power much more accurately: the root-mean-square power tracking error is reduced by more than \SI{80}{\percent} compared to the other controllers.
This is achieved while keeping the rates of change within the desired range: the associated max. generator torque rate of change is at \SI{0.364}{\kilo\newton\meter\per\second}, which is far from the limits of $\pm\SI{15}{\kilo\newton\meter\per\second}$.
Moreover, the maximum absolute pitch angle rate of change is \SI{4.55}{\degree\per\second}, which is within the limits of $\pm\SI{7}{\degree\per\second}$.
Hence, accurate power tracking does not strain the actuators to their respective limits.

\begin{figure}[htbp]
    \centering
	\includegraphics{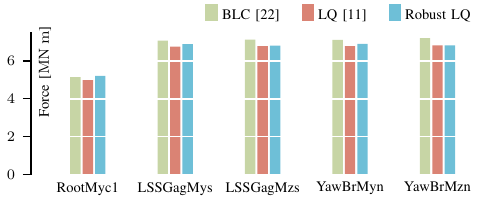}
    \caption{Damage equivalent loads for the three controllers. The blade root out-of-plane bending moment (RootMyc1), shaft non-rotating out-of-plane bending moment (LSSGagMys), shaft non-rotating yaw bending moment (LSSGagMzs), tower top fore-aft bending moment (YawBrMyn), and tower top torsion bending moment (YawBrMzn).}
    \label{fig:damageEquivalentLoads}
\end{figure}
In Figure~\ref{fig:damageEquivalentLoads}, the \acp{del} are displayed for the same scenario as shown in Figure~\ref{fig:powerTracking}.
The damage amplitudes are computed using the rainflow counting algorithm \cite{dowling1971fatigue}.
From these, \acp{del} are computed by extrapolating the damage amplitudes during the simulation onto a wind turbine life time \cite{burton2011wind}.
Evidently, the robust LQ controller achieves similar \acp{del} as the BLC and the \ac{lq} optimal controller from \cite{grapentin2022lq}.
While the expectation was to further decrease the tower top fore-aft bending moment (YawBrMyn) by modeling the tower top fore-aft movement, this was not achieved.
However, the greatly improved power tracking capabilities provided by the novel controller do not increase mechanical wear.


\section{Conclusion}
\label{sec:conclusion}
In this paper, we introduced a robust \ac{lq} optimal control approach employing \acp{lmi} for active power tracking of individual wind turbines.
We augmented the model to meet different control objectives.
This includes an integral error for power tracking and differential inputs.
In a realistic numerical case study, we showed that using the controller leads to highly accurate power tracking performance without compromising \acp{del}.
Future work will focus on including yaw dynamics into the control scheme, such that all actuators of the wind turbine are controlled by a multivariable controller.
Moreover, a validation in a wind tunnel is planned.


\addtolength{\textheight}{-0cm}   




\bibliography{literature}

\end{document}